\def\cal{\mathcal}
\def\cali{{\cal I}}
\let\phi\varphi
\definecolor{Comment}{rgb}{0.5,0,0.5}
\let\tilde\widetilde
\newtheorem{theorem}{Theorem}
\newtheorem{corollary}{Corollary}
\newtheorem{lemma}{Lemma}
\begin{document}
\title[Probability of infection in an SIR network is NP-Hard]{Finding the
probability of infection in an SIR network is NP-Hard}
\author{Michael Shapiro}
\address{Department of Pathology, Tufts University, Boston, MA 02111}
\email{michael.shapiro@tufts.edu}
\thanks{The first author wishes to acknowledge support from NIH Grant K25
AI079404-02.}
\author{Edgar Delgado-Eckert}
\address{Department of Biosystems Science and Engineering, Swiss Federal
Institute of Technology Zurich (ETH Z\"urich), Basel, Switzerland and Swiss
Institute of Bioinformatics.}

\begin{abstract}
{A common approach in epidemiology is to study the transmission of a disease
in a population where each individual is initially susceptible (S), may
become infective (I) and then removed or recovered (R) and plays no further
epidemiological role. Much of the recent work gives explicit consideration
to the network of social interactions or disease-transmitting contacts and
attendant probability of transmission for each interacting pair. The state
of such a network is an assignment of the values ${\{S,I,R\}}$ to its
members. Given such a network, an initial state and a particular susceptible
individual, we would like to compute their probability of becoming infected
in the course of an epidemic. It turns out that this and related problems
are \textbf{NP}-hard. In particular, it belongs in a class of problems for
which no efficient algorithms for their solution are known. Moreover,
finding an efficient algorithm for the solution of any problem in this class
would entail a major breakthrough in theoretical computer science.}
\end{abstract}
\keywords{epidemics, SIR networks, contact network, network reliability,
  NP-hard} 
\maketitle



\section{Introduction}

\label{sec:introduction} Mathematical modelling of epidemics is often traced
to the {celebrated} SIR model of Kermack and McKendrick \cite{KM}. This
model posits a population of constant size whose members fall into one of
three classes: susceptible (S), infective (I) and removed (R). Approximating
these as continuous and assuming well-mixing, i.e., each individual is in
equal contact with and equally likely to infect each other individual,
allows for an approximate description of the infection dynamics using
ordinary differential equations ({ODE)}.

Clearly, as it has been argued by many in theoretical \cite%
{Bansal22102007,N1,Eubank2004} as well as experimental studies \cite%
{Cauchemez15022011}, the well-mixing assumption is not an accurate
representation of real contact patterns. Thus, much recent work has focused
on the role of the network of disease-transmitting contacts. (Reviewed in 
\cite{Keeling22092005}. See also, \cite%
{Newman03thestructure,eubankNetworkBased}. For a comparison of well-mixed
and network-based models, see \cite{Keeling20051}.) Indeed, Kermack's and
McKendrick's{\ ODE model arises as the limiting case of a simplistic network
model in which each individual has an equal chance of infecting every other.}
However, real-world social contact networks exhibit complex patterns of
interconnection between individuals. Further, the probability of
transmitting disease from one individual to another depends on the nature,
frequency and duration of the contact as well as the immune competence of
the target individual. This leads to a modelling formalism of \emph{social
networks} as a probabilistic graph $\mathcal{G}=(G,{\mathop{\text{Pr}}})$.
Here $G$ is the graph $G=(V,E)$, each vertex $u\in V$ is an individual, each
edge $e=(u,v)\in E$ records the fact that $u$ might infect $v$ and ${%
\mathop{\text{Pr}}}:E\rightarrow \lbrack 0,1]$ gives the probability that $u$
infects $v$ if $u$ becomes infective while $v$ is susceptible. In this
formalism, $\mathcal{G}$ is a fixed graph $G$ with labelling ${%
\mathop{\text{Pr}}}$.

This relatively new modelling paradigm has triggered an enormous amount of
research in theoretical epidemiology. The field has greatly benefited from
approaches that range from applications of bond percolation theory and other
techniques from statistical physics \cite{PhysRevE.79.036113,
  PhysRevE.80.020901, N1, Eur.Phys.J.B_26, Meyers2006400,
  SIRinRandomNetworks,Volz07122007, PhysRevLett.86.3200, House06012011,
  PhysRevLett.90.028701} to large scale simulation endeavours \cite%
{Eubank2004, stroud2007, Mniszewski:2008:ESM:1400549.1400636,
  EpiSimSLA,Mniszewski:2008:PSA:1394977.1394987}. Given that this mathematical
formalism seems accurate and powerful to describe the spread of infectious
diseases, the natural question arises as to whether calculations performed
within this formalism can be used in practical situations to make useful
predictions. Such calculations are based on potentially measurable parameters
such as network topology and transmission probabilities \cite%
{Wallinga15_2006}. For instance, one could attempt to calculate the
probability that, given a social contact network $\mathcal{G}$, an epidemic
starting with a set $P$ of infectives results in the infection of an initially
susceptible individual $u$. Are there any computational limitations when
trying to calculate such magnitudes? If yes, how limiting are they?
Fortunately, to address the computational issues associated with this and
similar calculations, we don't need to start from scratch, given that network
engineers have already studied since the 1970s problems that are essentially
the same.

In the era of electronically digitalized information and digital computers,
communications networks have become the biggest and count among the most
important networks. The size of these networks is exponentially increasing.
For instance, the size of the Internet shows exponential growth since its
creation in the early nineties (http://www.isc.org/). As the components of
such networks are subject to failure, engineers face the problem of
designing, constructing and operating networks that meet the required
standards of reliability. Of particular interest is the estimation of how
reliable a given network is in performing its function, provided some
knowledge about the reliability of its components is available. In many
cases, the functionality of the network can be expressed as the ability of
its topology to support the network's operation. In other words, the network
is functional if and only if certain connectivity properties are fulfilled.
Consider a network of computers which use this network to transmit messages.
Let us suppose that each of these computers is reliable, but that each
communication link has some chance of failure when called upon to transmit a
message. We then encounter the same formalism explained above for social
networks. A\emph{\ communications network} is given by $\mathcal{G}=(G,{%
\mathop{\text{Pr}}})$ where each vertex $u\in V$ is a computer, each edge $%
e=(u,v)\in E$ is a communication link and ${\mathop{\text{Pr}}}:E\rightarrow
\lbrack 0,1]$ is the reliability of the communication link from $u$ to $v$.
One might ask, given a communications network $\mathcal{G}$, a set of
computers $P$ and a computer $u\notin P$, if the computers in $P$ all send a
message, what is the chance it will reach $u$? We will see that this is the
same problem we stated above in the context of epidemics on social contact networks.

It has long been known in the communications network literature that this
problem is computationally intractable. A standard benchmark of
computational complexity is the class of \textbf{NP}-complete\ problems.
This class has the following properties:

\begin{itemize}
\item At present, no algorithm for an \textbf{NP}-complete problem is known
to have a running time which is bounded by a polynomial. Indeed, many
algorithms for \textbf{NP}-complete problems have exponential running time.
It is unknown whether any \textbf{NP}-complete problem can be solved by an
algorithm with polynomial running time.

\item If any problem in this class can be solved by an algorithm whose
running time is bounded by a polynomial, then every problem in this class
can be solved by an algorithm whose running time is bounded by a polynomial.
\end{itemize}

In view of the second, it is considered unlikely that any \textbf{NP}
-complete\ problem has a polynomial time solution. The communication among
computers problem (and hence the epidemiology problem) listed above is known
to be as hard as any \textbf{NP}-complete\ problem. Such problems are termed 
\textbf{NP}-hard. This is not the first problem in network epidemiology
known to be \textbf{NP}-hard. Previously known examples include the
following: Given a social contact network and limited resources

\begin{itemize}
\item What is the optimal strategy for vaccinating a limited number of
individuals?

\item What is the optimal strategy for quarantining a limited number of
individuals?

\item What is the optimal strategy for placement of a limited number of
sensors for monitoring the course of an epidemic?
\end{itemize}

(See \cite{wangModeling, eubankStructural, aspnesSODA,
hayrapetyanUnbalancedGraphCuts}.) These problems involve the search for an
optimum among subsets of the vertices or edges of the given social contact network.
It might be hoped that finding the probability of infection of a single
individual would be computationally less demanding. As the engineers have
taught us, this is not so. While this result has been recently reported in
the physics and operations research community \cite{Laumanns_Zenklusen_2009}%
, it seems almost unknown among epidemiologists.

This article is organized as follows: In Section \ref%
{sec:computationalComplexity} we give a very brief overview of the relevant
concepts and methods in computational complexity. This provides the
unacquainted reader with the basic tools for understanding the main message
of this paper. Section \ref{sec:epidemicsOnNetworks} provides the elementary
formal mathematical framework for studying SIR epidemics on networks,
including the connection with percolation theory. In Section \ref%
{sec:communicationNetworks} we present a series of problems that have been
studied in network engineering and demonstrate their structural isomorphism
with certain problems concerning SIR epidemics on networks. Section \ref%
{sec:problemsInEpidemiology} is devoted to studying the computational
complexity of extended/generalized epidemiological problems. We finish in
Section \ref{sec:discussion} with some concluding remarks.


\section{Computational complexity}

\label{sec:computationalComplexity}

In this section we give a brief account of the class \textbf{NP}-complete.
This class is a common benchmark for describing problems which are
algorithmically soluble but computationally intractable. For those wishing a
fuller account we recommend \cite{gareyComputers}.

In describing the class \textbf{NP}-complete, it is useful to describe the
class \textbf{P}, and necessary to describe the class \textbf{NP}. These
classes of problems are defined in terms of computational complexity.

The computational complexity of a problem $\Pi$ is measured in terms of the
running time necessary for an algorithm which solves $\Pi$. Defining these
terms requires some preliminaries. First, note that a \emph{problem} $\Pi$
consists of a collection of \emph{instances}, $D_\Pi$. Thus, ``Determine
whether 18 is composite'' is an instance of the problem, ``For any integer $%
n $, determine whether $n$ is composite.'' This is an example of a \emph{\
decision problem}, that is, for each instance, the answer is either ``yes''
or ``no''. A decision problem $\Pi$ can be formalized as the pair $%
(D_\Pi,Y_\Pi)$, where $Y_\Pi \subset D_\Pi$ consists of the \emph{yes
instances}. In this example, $D_\Pi$ is the set of integers and $Y_\Pi$ is
the set of composite integers. We will refer to this problem as $\Pi_\text{%
composite}$.

Notice that each instance $\pi \in \Pi $ has a \emph{size}, $\ell (\pi )$
and that the computational cost of solving the problem grows with the size
of the problem. In this example, the size $\ell (n)$ of the instance $n$ is
the number of digits in $n$. If we then have an algorithm $M$ which solves $%
\Pi $, we can consider the running time $r_{M}(\pi )$ required by $M$ when
applied to the instance $\pi $. This could be measured in elapsed time or in
terms of the number of steps carried out by $M$ in this computation. We can
then define the \emph{running time} of $M$ to be 
\begin{equation*}
r_{M}(n)=%
\begin{cases}
0 &  \\ 
\hskip.2in\text{if $\{\pi \mid \ell (\pi )=n\}=\emptyset $} &  \\ 
\max \{r_{M}(\pi )\mid \ell (\pi )=n\} &  \\ 
\hskip.2in\text{otherwise} & 
\end{cases}%
\end{equation*}%
The class \textbf{P}\ consists of those decision problems which can be
solved with a polynomial running time. Stated formally, a decision problem $%
\Pi $ belongs to the class \textbf{P}\ if there is an algorithm $M$ which
solves $\Pi $ and a polynomial $p(n)$ such that $r_{M}(n)\leq p(n)$. An
example of a problem in the class \textbf{P}\ is $\Pi _{\text{mult}}$. An
instance of $\Pi _{\text{mult}}$ is three integers, $a$, $b$ and $c$. The
size of an instance is the total number of digits in $a$, $b$ and $c$. These
constitute a yes instance if $a\times b=c$.

The class \textbf{NP}\ consists of non-deterministic polynomial time
problems. That is, a decision problem is \textbf{NP}\ if a machine which is
allowed to guess can verify a yes instance in polynomial time. $\Pi _{\text{%
composite}}$ provides and example of a problem which is \textbf{NP}. Given
an instance of $\Pi _{\text{composite}}$, i.e., an integer $c$, if $c$ is,
in fact, composite, a correct guess as to its factors $a$ and $b$, can be
verified in polynomial time by calling $\Pi _{\text{mult}}$. One can define
this class in terms of the operation of non-deterministic Turing machines.
See, for example, \cite{HopcroftUllman}. Clearly $\mathbf{P}\subseteq 
\mathbf{NP}$. In view of the perceived complexity of many problems in 
\textbf{NP}, it is generally believed that $\mathbf{P}\neq \mathbf{NP}$.

The class \textbf{NP}-complete\ consists of the hardest problems in \textbf{%
NP}. The problems in \textbf{NP}-complete\ have the following property:
Suppose that $\Pi _{1}$ is \textbf{NP}-complete. Suppose that $\Pi _{2}$ is 
\textbf{NP}. Then there is an algorithm $M$ which translates any instance $%
\pi _{2}$ of $\Pi _{2}$ into an instance $\pi _{1}$ of $\Pi _{1}$ such that $%
\pi _{1}$ is a yes instance of $\Pi _{1}$ if and only if $\pi _{2}$ is a yes
instance of $\Pi _{2}$. Further, both the computational cost of translating $%
\pi _{2}$ into $\pi _{1}$ and the size $\ell (\pi _{1})$ are bounded by a
polynomial in $\ell (\pi _{2})$. It follows that if any \textbf{NP}%
-complete\ problem can be solved (deterministically) in polynomial time,
then every \textbf{NP}\ problem can be solved in polynomial time. Put
another way, if any \textbf{NP}-complete\ problem can be solved in
polynomial time, we will then have $\mathbf{P}=\mathbf{NP}$.

Hundreds of problems are known to be \textbf{NP}-complete\ \cite%
{gareyComputers}. These come from fields such as graph theory, number
theory, scheduling, code optimization and many others. They are widely
believed to be intrinsically intractable, but this remains an open question.
Other problems which are not necessarily \textbf{NP}-complete\ (e.g.,
because they are not decision problems) are known to be at least as hard.
This is because for such a problem, say $\Gamma ,$ there is an \textbf{NP}%
-complete problem $\Pi $ that can be \emph{reduced} to $\Gamma ,$ where the
computational cost of this reduction is bounded by a polynomial in the
length of the instance problem considered. Thus, $\Gamma $ can be used to
solve $\Pi .$ These problems are called \textbf{NP}-hard. Since \textbf{NP}%
-complete problems transform to each other, all \textbf{NP}-complete
problems can be solved by a reduction to an \textbf{NP}-hard\ problem. 
\textbf{NP}-hard\ problems are found in fields as diverse as epidemiology
and origami \cite{origami}.


\section{SIR epidemics on networks}

\label{sec:epidemicsOnNetworks}

We start by describing a network SIR model in which both the population and
the individual transmission probabilities are constant with respect to time.

A state of this system is the assignment of each individual to one of the
classes S, I or R. The transmission probabilities determine who can infect
whom and consequently which states can follow a given state. Indeed, they
also determine the probability that any one of these states follows the
given state. An epidemic is a sequence of states each of which is a possible
successor of the previous state. Consequently, given an initial state, we
can speak of the probability that an epidemic evolves through a given
sequence of states and the probability that it arrives at a particular
state. Let us formalize this.

As above, a \emph{social contact network} is a pair $\mathcal{G}=(G,{\ %
\mathop{\text{Pr}}})$ where $G$ is the graph with vertex set $V$ and edge
set $E$. Each edge has the form $(u,v)$ with $u,v\in V$ and $u\neq v$. The
function ${\mathop{\text{Pr}}}$ assigns a probability to each edge, that is $%
{\mathop{\text{Pr}}}:E\rightarrow \lbrack 0,1]$. The \emph{states} of $%
\mathcal{G}$ are given by\footnote{%
In particular, a state $\phi $ can be seen as a subset of the Cartesian
product $V\times $ ${\{S,I,R\},}$ and therefore, it is meaningful to speak
of the probability of a state or of a collection of states.} 
\begin{equation*}
\mathop{\rm St}(\mathcal{G})=\{\phi \mid \phi :V\rightarrow {\{S,I,R\}}\}.
\end{equation*}

Given states $\phi _{1}$ and $\phi _{2}$, the state $\phi _{2}$ is a \emph{%
possible successor} of $\phi _{1}$ if it satisfies the following conditions:

\begin{enumerate}
\item If $\phi_1(u) = R$, then $\phi_2(u) = R$. (Recovered individuals stay
recovered.)

\item If $\phi_1(u) = I$, then $\phi_2(u) = R$. (Infected individuals
recover in one step.)

\item If $\phi _{1}(u)=S$, then $\phi _{2}(u)\in \{S,I\}$. (Susceptible
individuals either stay susceptible or become infected.)

\item If $\phi _{2}(u)=I$, then $\phi _{1}(u)=S$ and there is a vertex $q\in
V\backslash \{u\}$ and an edge $(q,u)$ with $\phi _{1}(q)=I$. (Infected
individuals were susceptible and were infected by a neighbour.)
\end{enumerate}

The requirement that individuals recover in exactly on time-step might
appear to be a drastic oversimplification. However, the formalism is rich
enough to accommodate patterns of latency and extended periods of
infectivity. This can be done by replacing the individual represented by
vertex $u$ by a sequence of vertices $u_{1},u_{2},\dots $ representing $u$
on day 1, $u$ on day 2, etc. See, e.g., \cite{Floyd}.

An \emph{epidemic} $\Phi $ is a sequence of states $\phi _{1},\dots ,\phi
_{k}$ where $\phi _{i+1}$ is a possible successor of $\phi _{i}$ for $%
i=1,\dots ,k-1$. The length of this epidemic is $\ell (\Phi )=k$. Since
individuals recover after one step, infection must be transmitted or die
out. As a consequence, no epidemic can be longer than the longest
self-avoiding path in $G$, for otherwise, it must infect some vertex twice.
If we assume that each edge transmits or fails to transmit independently,
then it is not hard to compute the probability that a susceptible individual
is infected by its infected neighbours. This, in turn, allows one to compute
the probability that a state $\phi _{1}$ is followed by a particular
successor state $\phi _{2}$. Let us denote this probability by ${%
\mathop{\text{Pr}}}(\phi _{2}\mid \phi _{1})$. This system enjoys the Markov
property, that is, the probability of a given state depends only on the
previous state. Thus given an initial state $\phi _{1}$, the probability of
the epidemic $\Phi =\phi _{1},\dots ,\phi _{n}$, is 
\begin{equation*}
{\mathop{\text{Pr}}}(\Phi \mid \phi _{1})=\prod_{i=2}^{n}{\mathop{\text{Pr}}}%
(\phi _{i}\mid \phi _{i-1}).
\end{equation*}%
The probability that $u$ becomes infected at the $n^{\text{th}}$ step in the
course of an epidemic starting with $\phi _{1}$ is

\begin{equation*}
{\mathop{\text{Pr}}} (\phi _{n}(u)=I\mid \phi _{1})= 
 \sum_{\{\phi _{1},\dots
,\phi _{n}\mid \atop \phi _{n}(u)=I\}} {\mathop{\text{Pr}}}(\phi _{1},\dots ,\phi
_{n}\mid \phi _{1}).
\end{equation*}%
Abusing notation, we denote the probability that $u$ becomes infected in the
course of some epidemic starting with $\phi _{1}$ by 
\begin{equation*}
{\mathop{\text{Pr}}}(u\mid \phi _{1})= 
\sum_{j=1}^{n}{\mathop{\text{Pr}}}(\phi _{j}(u)=I\mid \phi _{1}).
\end{equation*}

Note that since an infected individual becomes recovered at the next stage,
no epidemic appearing in this sum is an initial sub-epidemic of another.
Accordingly, these are disjoint cases.

We will be interested in initial states $\phi_1$ consisting only of
infectives and susceptibles. In this case, we can identify $\phi_1$ with the
set of infectives $P = \phi_1^{-1}(I)$. This gives the notation ${\ %
\mathop{\text{Pr}}}(u \mid P)$.

Let us formalize the problem $\Pi_{\text{epidemic}}$ of finding ${\ %
\mathop{\text{Pr}}}(u \mid P)$. An instance $\pi$ of this problem consists of

\begin{itemize}
\item A graph $G = (V,E)$.

\item A labelling%
\begin{footnote}{There are technical issues here concerning
  the values of these probabilities. To avoid these issues they are
  usually assumed to be rational numbers and bounds are placed on
  the sizes of their denominators.  For details, see
  \cite{colbournCombinatorics}. Since $\mathbb{Q}$ is dense in
  $\mathbb{R}$, this is not a limitation on the possible probability
  values relevant in real applications.}\end{footnote}  
${\mathop{\text{Pr}}}:E\rightarrow \lbrack 0,1]\cap 
\mathbb{Q}
$.

\item An initial infective set $P \subset V$.

\item An individual $u\in V \setminus P$.
\end{itemize}

A solution to $\pi$ is the value ${\mathop{\text{Pr}}}(u \mid P)$.

We take $\ell (\Pi )=\left\vert V\right\vert $.

The epidemiological viewpoint we have just described follows the evolution
of probabilities over time. If we ignore the order of events, we come to the
simpler viewpoint of \emph{percolation}. Percolation methods have been used
in epidemiology. (See, for example, \cite{G,S,Floyd,N1,secondLook,M}. The
latter two contain extensive references.) Since an individual is only
infected for one time step in the course of any epidemic, an edge can
transmit at most once in the course of an epidemic. This allows us to
consider a random variable that takes as values subgraphs of $G$. Given $%
\mathcal{G}$, we take ${\mathbb{G}}$ to be the random variable which takes
values in $\{G^{\prime }=(V,E^{\prime })\mid E^{\prime }\subseteq E\}$. The
probability that ${\mathbb{G}}$ takes the value $G^{\prime}$ is given by 
\begin{equation*}
{\mathop{\text{Pr}}}(G^{\prime })= 
\left( \prod_{e\in E^{\prime }}{\ \mathop{\text{Pr}}}(e)\right) \left(
\prod_{e\notin E^{\prime }}(1-{\ \mathop{\text{Pr}}}(e))\right) .
\end{equation*}
We may think of $E^{\prime }$ as determining whether $e=(u,v)$ transmits in
the course of an epidemic if that epidemic has a state $\phi $ with $\phi
(u)=I$ and $\phi (q)=S$. Given a path $\tau $ in $G$, we will abuse notation
by writing $\tau \subset G$ and $e\in \tau $ for the edges of $\tau $. Given
a path $\tau $, the probability that it appears in $G^{\prime }=(V,E^{\prime
})$ is 
\begin{equation}
{\mathop{\text{Pr}}}\left( \{G^{\prime }\mid \tau \subset G^{\prime
}\}\right) =\prod_{e\in \tau }{\mathop{\text{Pr}}}(e).
\label{eq:probSubgraph}
\end{equation}
For a proof of the following theorem, see, e.g., \cite{Floyd}.

\begin{theorem}
\label{thm:percolation} Suppose $\mathcal{G}$ is social contact network. Then 
\begin{equation*}
{\mathop{\text{Pr}}}(u\mid P)= 
{\mathop{\text{Pr}}}\left( \{G^{\prime }\mid \text{$G^{\prime }$ contains
a path from $P$ to $u$}\}\right) .
\end{equation*}%
In particular, ${\mathop{\text{Pr}}}(u\mid P)$ is a finite sum of terms of
the form (\ref{eq:probSubgraph}). Accordingly, it is a polynomial in the
values ${\mathop{\text{Pr}}}(e)$ with integer coefficients and degree $%
\left\vert E\right\vert $ .
\end{theorem}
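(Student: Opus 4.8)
The plan is to couple the time-ordered epidemic with the random subgraph $\mathbb{G}$. The key observation, already noted above, is that an individual is infective for exactly one step, so each (directed) edge $(q,u)$ is \emph{consulted} at most once in any epidemic, namely during the single step (if any) at which $q$ is infective while $u$ is susceptible. Hence I may decide in advance, independently for each edge $e$, whether $e$ \emph{would} transmit when consulted, letting it do so with probability $\pr(e)$; collecting the transmitting edges into $E'$ yields a subgraph $G'=(V,E')$ whose law is exactly that of $\mathbb{G}$. By the edge-independence assumption this pre-assignment drives the epidemic faithfully: conditioned on $\{\mathbb{G}=G'\}$ the epidemic started from $P$ is deterministic.

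Next I would prove the \emph{reachability lemma}: for a fixed realization $G'=(V,E')$, the set of vertices that are ever infected in the epidemic started from $P$ equals the set of vertices reachable from $P$ by a directed path in $G'$. One inclusion is immediate from successor-condition~(4): if $u$ is infected then it has an infective predecessor $q$ with $(q,u)\in E'$, and iterating backwards produces a path in $G'$ from $P$ to $u$. For the converse I would induct on the graph distance $d(v)$ from $P$ in $G'$: vertices with $d(v)=0$ lie in $P$ and are infective at step~$1$; if $d(v)=d+1$, pick a predecessor $w$ with $d(w)=d$ and $(w,v)\in E'$, note $w$ is infected at some step $t\le d+1$ by the inductive hypothesis, and observe that at step $t+1$ either $v$ is still susceptible---whence the transmitting edge $(w,v)$ infects it---or $v$ has already been infected; either way $v$ is ever infected. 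Since $\{u\text{ ever infected}\}$ and $\{G'\text{ contains a path from }P\text{ to }u\}$ are then the \emph{same} event, summing over realizations gives the displayed identity $\pr(u\mid P)=\pr(\{G'\mid G'\text{ contains a path from }P\text{ to }u\})$.

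The polynomial statement then follows by expanding. Writing the event as a disjoint union over the finitely many subgraphs that contain a $P$-to-$u$ path,
\begin{equation*}
\pr(u\mid P)=\sum_{G'=(V,E')}\Big(\prod_{e\in E'}\pr(e)\Big)\Big(\prod_{e\notin E'}(1-\pr(e))\Big),
\end{equation*}
where the sum ranges over those $G'$ containing such a path. Each summand is a product of exactly $|E|$ factors, one per edge, and expanding the factors $1-\pr(e)$ introduces only integer coefficients; summing the finitely many summands therefore yields a polynomial in the values $\pr(e)$ with integer coefficients and degree $|E|$. Equivalently, applying inclusion--exclusion to the union $\bigcup_\tau\{\tau\subset G'\}$ over the simple paths $\tau$ from $P$ to $u$ and using (\ref{eq:probSubgraph}) expresses $\pr(u\mid P)$ directly as a signed finite sum of terms $\prod_{e\in S}\pr(e)$ of the form (\ref{eq:probSubgraph}).

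I expect the coupling step to be the main obstacle: one must argue carefully that collapsing the sequential, time-indexed transmission trials into a single up-front independent assignment to the edges is legitimate, which rests precisely on the ``each edge is consulted at most once'' observation together with edge-independence, and one must check that the case in which the target of an edge has already been infected does not disturb the identity \emph{ever infected $=$ reachable}. The timing bookkeeping in the inductive (reachable $\Rightarrow$ infected) direction is the delicate point; once the reachability lemma is in hand, the polynomial corollary is routine.
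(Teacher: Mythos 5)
Your proof is correct, but there is little in the paper to compare it against: the paper does not prove Theorem~\ref{thm:percolation} at all, deferring instead to the literature (``For a proof of the following theorem, see, e.g., \cite{Floyd}''). What you have written is the standard coupling argument linking the time-ordered SIR process to bond percolation, which is precisely the style of argument that citation points to, so in effect you have supplied the proof the paper omits. Both halves of your argument are sound. The faithfulness of the coupling rests, as you say, on the fact that an individual is infective for exactly one step, so the Bernoulli variable attached to an edge is examined at most once; to make this airtight one adds the one-line observation that, given the history through step $t$, the set of edges consulted at step $t$ is determined by $\phi_t$ and is disjoint from all edges consulted earlier, hence the pre-assigned independent variables reproduce the correct conditional transition probabilities $\pr(\phi_{t+1}\mid\phi_t)$. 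Your reachability lemma is also right, and your case split (``$v$ still susceptible'' versus ``$v$ already infected'') handles the only delicate point of the induction; in fact a cleaner invariant is available: in the coupled deterministic dynamics the infectives at step $t+1$ are exactly the vertices at $E'$-distance $t$ from $P$, from which ever-infected $=$ reachable is immediate. Two cosmetic remarks: in the forward direction you invoke successor condition~(4), but in the coupled dynamics you need (and have, by construction) the slightly stronger fact that the infecting edge lies in $E'$; and your expansion yields degree at most $\left\vert E\right\vert$ (each summand being a product of exactly $\left\vert E\right\vert$ factors), which is how the theorem's ``degree $\left\vert E\right\vert$'' should be read. Finally, your inclusion--exclusion remark is the right justification for the clause that $\pr(u\mid P)$ is a finite sum of terms of the form~(\ref{eq:probSubgraph}), since intersections of the events $\{\tau\subset G'\}$ are again events $\{S\subset G'\}$ with probability $\prod_{e\in S}\pr(e)$.
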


This theorem provides the link between epidemiology and communications
networks.


\section{\textbf{NP}-hard\ problems on communications networks: Consequences
for epidemiological calculations}

\label{sec:communicationNetworks}

We assume that a communications network consists of a set of computers, each
of which is reliable and a set of communication links each of which has a
known likelihood of failure and that the communication links function or
fail independently. There is no loss of generality in regarding each node as
infallible, since a fallible computer can be modelled as a pair of nodes
with a fallible link connecting its input to its output. Once again, we can
formalize this as $\mathcal{G}=(G,{\mathop{\text{Pr}}})$, where $G(V,E)$
represents installed capacity ($V$ being the set of computers and $E$ the
set of communication links), ${\mathop{\text{Pr}}}:E\rightarrow \lbrack 0,1]$
the reliability of each link and ${\mathbb{G}}$ is the random variable
assuming values in $\{G^{\prime }=(V,E^{\prime })\mid E^{\prime }\subseteq
E\}.$ Each $G^{\prime }=(V,E^{\prime })$ is the subnetwork of functioning
links left after the failure of the edges $e\in E\backslash E^{\prime }$.
Successful transmission of a message on this network depends on the
connectivity of the subgraph realized by ${\mathbb{G}}$. Network engineers
focus on several kinds of connectivity. We first examine two of the simplest.

The \emph{two-terminal reliability problem} is defined as the calculation of
the probability that there is at least one correctly functioning path in the
network connecting a predefined source node to a predefined target node. An
instance $\pi $ of $\Pi _{\text{ two terminal}}$ consists of the following:

\begin{itemize}
\item A graph $G = (V,E)$.

\item A labelling ${\mathop{\text{Pr}}}:E\rightarrow \lbrack 0,1]\cap 
\mathbb{Q}
$.

\item A source terminal $u\subset V$.

\item A target terminal $v\in V\setminus \{u\}$.
\end{itemize}

A solution to $\pi $ is the value ${\mathop{\text{Pr}}}(v\mid u)$.

By Theorem~\ref{thm:percolation}, this value is an integer polynomial in the
values ${\mathop{\text{Pr}}}(E)$. Thus, if we restrict to the case where ${\ %
\mathop{\text{Pr}}}(E)$ takes a single value, this becomes an integer
polynomial in one variable called the \emph{reliability polynomial}. Thus a
related problem is the following:

An instance $\pi$ of $\Pi_{\text{rel poly}}$ is

\begin{itemize}
\item A graph $G = (V,E)$.

\item A source terminal $u \subset V$.

\item A target terminal $v\in V \setminus \{u\}$.
\end{itemize}

A solution to $\pi $ is the coefficients of the reliability polynomial. A
number of additional network reliability problems have been studied (see 
\cite{colbournCombinatorics}, an excellent introduction to this field).
These include

\begin{itemize}
\item $k$ terminal reliability. This requires that $k$ chosen terminals are
mutually pair wise connected.

\item Broadcasting, also known as all terminal reliability: This requires
that all terminals are pair wise connected.
\end{itemize}

Naturally, in addition to the network reliability problems presented above,
many other reasonable problems can be defined or could arise from practical
applications. Formally, once a model $\mathcal{G}=(G,{\mathop{\text{Pr}}})$
of the network has been chosen, a general mechanism to define a reliability
problem is the following: A network operation is specified by defining a set 
$Op(G)\subseteq \{G^{\prime }=(V,E^{\prime })\mid E^{\prime }\subseteq E\}$
of states considered to be functional. The set $Op(G)$ is sometimes called \
a \emph{stochastic binary system}; the elements of $Op(G)$ are termed \emph{%
\ pathsets}. Specifying the pathsets for $G$ determines the whole stochastic
binary system, and therefore defines the network operation. The reliability
problem consists of finding the probability $\Pr (Op(G))$ that the
probabilistic graph ${\mathbb{G}}$ assumes values in the set $Op(G).$

A first naive algorithm to solve a network reliability problem formulated in
this general manner is to enumerate all states of ${\mathbb{G}}$ (i.e., the
cardinality of the set $\{G^{\prime }=(V,E^{\prime })\mid E^{\prime
}\subseteq E\}$), determine whether a given state is a pathset or not using
some predesigned recognition procedure\footnote{%
Such recognition procedures generally boil down to path-finding or spanning
tree methods, which are efficient (i.e., of polynomial running time) and
well-know procedures in algorithmic graph theory and computer science.}, and
sum the occurrence probabilities of each pathset. Due to the statistical
independence assumed, the probability of occurrence of a pathset is simply
the product of the operation probabilities of the edges in the pathset and
the failure probabilities of the edges not present in the pathset. Complete
state enumeration requires the generation of all $2^{\left\vert E\right\vert
}$ states of ${\mathbb{G}},$ implying that the running time of this
algorithm would exponentially depend on the number of links in the network.

A substantial amount of effort has been put into finding more efficient
algorithms for exact calculation of network reliability problems (see \cite%
{colbournCombinatorics}). However, efficient exact solutions seem unlikely:

\begin{theorem}
\label{thm:npHard} The problems $\Pi _{\text{two terminal}}$ and $\Pi _{ 
\text{rel poly}}$ are \textbf{NP}-hard.
\end{theorem}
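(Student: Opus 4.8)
The plan is to establish NP-hardness by reducing a known NP-complete or NP-hard problem to $\Pi_{\text{two terminal}}$ (the result for $\Pi_{\text{rel poly}}$ will follow, since computing the reliability polynomial's coefficients lets one evaluate two-terminal reliability at any rational $p$). The classical result here, due to Valiant, is that the counting problem of computing network reliability is $\sharpP$-hard; the standard route is to reduce from a counting version of a connectivity or satisfiability problem. The cleanest approach for this exposition is to observe that $\Pi_{\text{two terminal}}$ is essentially the \emph{two-terminal network reliability} problem, and then invoke (or reproduce) the reduction from a problem already known to be hard---the most natural being the problem of counting the number of $s$-$t$ paths, or equivalently, relating the reliability polynomial to a quantity whose evaluation encodes a $\sharpP$-complete count.

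First I would make precise the connection established by Theorem~\ref{thm:percolation}: since $\Pr(v \mid u)$ is an integer polynomial in the edge probabilities of degree $|E|$, setting all edge probabilities equal to a common variable $p$ yields the reliability polynomial $R(p) = \sum_i N_i\, p^i (1-p)^{|E|-i}$, where $N_i$ counts the pathsets (subgraphs connecting $u$ to $v$) with exactly $i$ edges. The coefficients $N_i$ are therefore exactly the data returned by a solution to $\Pi_{\text{rel poly}}$. Next I would show that recovering these coefficients---or even evaluating $R(p)$ at sufficiently many rational points---would let us count $s$-$t$ connected subgraphs, which is a $\sharpP$-complete quantity. Concretely, one evaluates $R$ at $|E|+1$ distinct rational values of $p$ and interpolates to recover all the $N_i$; since each evaluation is an instance of $\Pi_{\text{two terminal}}$ of polynomial size, a polynomial-time oracle for $\Pi_{\text{two terminal}}$ would yield a polynomial-time algorithm for the counting problem.

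The key steps, in order, are: (1) reduce the common edge-probability case to the reliability polynomial via Theorem~\ref{thm:percolation}; (2) identify the coefficients of the reliability polynomial with counts of $s$-$t$ pathsets; (3) cite or construct a reduction from a canonical $\sharpP$-complete or NP-complete problem---the historically correct source is Valiant's theorem on the $\sharpP$-completeness of counting connected subgraphs or, for a decision-flavored NP-hardness statement, a reduction encoding an NP-complete problem such as a variant of Hamiltonian path or $s$-$t$ connectivity counting into the gadget structure of $G$; and (4) tie the pieces together by a polynomial-time Lagrange interpolation argument so that an efficient solver for either problem would collapse the counting problem to polynomial time.

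The hard part will be step~(3): producing (or correctly invoking) the reduction that certifies genuine $\sharpP$-hardness. Merely exhibiting that the reliability polynomial encodes a counting problem does not by itself prove hardness---one must establish that \emph{that particular} count is itself intractable. The honest route, given the expository nature of this paper, is to cite the established literature (Valiant, and the treatment in \cite{colbournCombinatorics}) for the foundational $\sharpP$-completeness of two-terminal reliability, and then supply the short self-contained interpolation argument connecting $\Pi_{\text{two terminal}}$ and $\Pi_{\text{rel poly}}$ to that established hardness. I expect the proof in the paper to follow exactly this strategy: lean on the classical network-reliability hardness results and spend its own effort only on the interpolation bridge and the identification of coefficients with pathset counts.
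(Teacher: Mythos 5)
Your proposal is correct and takes essentially the same approach as the paper: the paper gives no self-contained proof of Theorem~\ref{thm:npHard} at all, simply citing Theorem~1 of \cite{Ball80} for the NP-hardness and \cite{Buzacott,BallPHD,Rosenthal,Ball79,Agrawal,colbournCombinatorics} for the \textbf{\#P}-completeness---exactly the ``honest route'' of leaning on the classical network-reliability literature that you predicted. Your additional interpolation bridge between $\Pi_{\text{two terminal}}$ and $\Pi_{\text{rel poly}}$ is sound and is in fact the same device the paper itself deploys later, in its proof of Theorem~\ref{thm:latency}.
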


For a proof of this result, see, for instance, Theorem 1 in \cite{Ball80}.
These problems belong to the class \textbf{\#P}-complete\ \cite%
{Buzacott,BallPHD,Rosenthal,Ball79,Agrawal,colbournCombinatorics}. \textbf{%
\#P}\ is the set of the counting problems associated with the decision
problems in the set \textbf{NP}. Thus, while a decision problem might ask
whether something exists (e.g., an assignment of truth values to a set of
variables which satisfies a given formula), the corresponding enumeration
problem asks how many of these there are. Solving the enumeration problem
solves the corresponding decision problem since knowing whether the number
of these things is positive tells us whether one exists. In particular, the
counting version of any problem is always at least as hard as the
corresponding existence problem. In analogy to \textbf{NP}-completeness, a
problem is \textbf{\#P}-complete if and only if it is in \textbf{\#P}, and
every problem in \textbf{\#P} can be reduced to it by a polynomial-time
counting reduction (see \cite{gareyComputers} for more details).

\begin{corollary}
\label{thm:sirIsNpHard} The problem $\Pi _{\text{epidemic}}$ is \textbf{NP}%
-hard.
\end{corollary}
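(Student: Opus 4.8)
The plan is to recognize $\Pi_{\text{two terminal}}$ as nothing more than the restriction of $\Pi_{\text{epidemic}}$ to singleton initial infective sets, and then to appeal to the \textbf{NP}-hardness of the former established in Theorem~\ref{thm:npHard}. Since \textbf{NP}-hardness is inherited along polynomial-time reductions, it suffices to exhibit such a reduction from $\Pi_{\text{two terminal}}$ to $\Pi_{\text{epidemic}}$, and Theorem~\ref{thm:percolation} will do the real work by certifying that the two problems return the same numerical value on corresponding inputs. The reduction itself is essentially the identity map.

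Concretely, I would begin from an arbitrary instance $\pi$ of $\Pi_{\text{two terminal}}$, that is, a graph $G=(V,E)$, a rational labelling ${\mathop{\text{Pr}}}:E\rightarrow[0,1]\cap\mathbb{Q}$, a source $u$, and a target $v\in V\setminus\{u\}$. From this I construct the instance $\pi'$ of $\Pi_{\text{epidemic}}$ consisting of the same $G$, the same labelling ${\mathop{\text{Pr}}}$, the singleton infective set $P=\{u\}$, and the distinguished individual $v$. This is a legitimate instance, since $P\subset V$ and $v\in V\setminus P$, and the construction merely copies the input and records $u$ as a singleton set, so it runs in time linear in the input and leaves the size measure $\ell=|V|$ unchanged.

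It then remains to verify that a solution to $\pi'$ is a solution to $\pi$. By Theorem~\ref{thm:percolation} applied to the two-terminal instance, the reliability is ${\mathop{\text{Pr}}}(v\mid u)={\mathop{\text{Pr}}}(\{G'\mid G' \text{ contains a path from } u \text{ to } v\})$, while the same theorem applied to $\pi'$ yields ${\mathop{\text{Pr}}}(v\mid P)={\mathop{\text{Pr}}}(\{G'\mid G' \text{ contains a path from } P \text{ to } v\})$. Because $P=\{u\}$, a path from $P$ to $v$ is precisely a path from $u$ to $v$, so the two events, and hence the two numerical answers, coincide. Consequently any algorithm for $\Pi_{\text{epidemic}}$ solves $\Pi_{\text{two terminal}}$ after this trivial translation, and, composing reductions, the \textbf{NP}-hardness of $\Pi_{\text{two terminal}}$ transfers to $\Pi_{\text{epidemic}}$.

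I do not anticipate a genuine obstacle in this argument, because the substantive content sits entirely in the two results already available: the percolation identity of Theorem~\ref{thm:percolation} on one side and the reliability-theoretic hardness of Theorem~\ref{thm:npHard} on the other. The only point deserving care is bookkeeping rather than mathematics, namely confirming that restricting $\Pi_{\text{epidemic}}$ to singleton initial infectives recovers $\Pi_{\text{two terminal}}$ verbatim, so that no rounding of probabilities, rescaling of the labelling, or change in the size measure $\ell$ is required for the reduction to remain polynomial.
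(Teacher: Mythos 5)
Your proposal is correct and follows essentially the same route as the paper, whose proof is the one-line observation that every instance of $\Pi_{\text{two terminal}}$ is an instance of $\Pi_{\text{epidemic}}$ with $P$ a singleton; your invocation of Theorem~\ref{thm:percolation} to confirm that ${\mathop{\text{Pr}}}(v\mid u)={\mathop{\text{Pr}}}(v\mid\{u\})$ merely makes explicit what the paper leaves implicit.
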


To see this, notice that every instance of $\Pi _{\text{two terminal}}$ is
an instance of $\Pi _{\text{epidemic}}$, namely, an instance in which $P$
consists of a single vertex.

More generally, despite dedicated efforts, no algorithm of polynomial
running time has been found that allows for the exact calculation of the
probability $\Pr (Op(G))$ of a given set of pathsets $Op(G),$ unless very
specific assumptions are made on the topology of the underlying
probabilistic network (\cite{colbournCombinatorics,NetRel_Harms}). We
consider it an open question as to which (if any) of these more general
network reliability problems (defined through the choice of a suitable
stochastic binary system $Op(G)$) correspond to epidemiological problems.


\section{\textbf{NP}-hardness of extended\ problems in epidemiology}

\label{sec:problemsInEpidemiology}

\subsection*{Epidemic on networks with time-varying transmission
probabilities}

As we have seen in the previous section, the seemingly simple problem of
finding an individual's chances of infection is \textbf{NP}-hard. This is
even so in the case where the set of initial infectives is a single
individual.

We can generalize $\Pi _{\text{epidemic}}$ by allowing transmission
probabilities to vary over time. We have seen that the length of any
epidemic is at most the length of the longest self-avoiding path in $G$.
Consequently, time-varying transmission probabilities can be encoded as 
\begin{equation*}
{\mathop{\text{Pr}}}:E\times \{1,\dots ,\left\vert E\right\vert
\}\rightarrow \lbrack 0,1].
\end{equation*}%
In this case, percolation methods no longer apply. However, every instance
of $\Pi _{\text{epidemic}}$ can be mapped into an instance of this extended
problem. Thus, the time-varying version of this problem is \textbf{NP}-hard.

\subsection*{Epidemic on networks with disease latency}

One might also generalize $\Pi _{\text{epidemic}}$ to allow patterns of
latency and extended periods of infectivity\footnote{%
For a more general version of this see \cite{Floyd}.}. We will take $%
\mathcal{I}$ to be a sequence of distinct states, $\{I_{1},I_{2},\dots
,I_{N}\}$. We assume that for each stage $I_{i}$ there is an infectivity $%
\mu _{i}$ and a probability of recovery $\rho _{i}$. We take $\rho _{N}=1$.
We now consider a social contact network $\mathcal{G}$ and infectivity pattern $%
\mathcal{I}$. We refer to this as an \emph{$\{S,\mathcal{I},R\}$ network}.
The states of this network are 
\begin{equation*}
\{\phi \mid \phi :V\rightarrow \{S\}\cup \mathcal{I}\cup \{R\}\}.
\end{equation*}
We modify the definition of possible successor states so that the allowable
transitions are from $S$ to $I_{1}$, from $I_{i}$ to $I_{i+1}$ for $%
i=1,\dots ,N-1$ and from $I_{i}$ to $R$ for $i=1,\dots ,N$. If $\phi
(u)=I_{i}$, $u$ transitions to state $R$ with probability $\rho _{i}$ and to
state $I_{i+1}$ with probability $1-\rho _{i}$. If $e=(u,v)\in E$ and $\phi
(u)=I_{i}$, and $\phi (v)=S$, then $u$ infects $v$ with probability ${\ %
\mathop{\text{Pr}}}_{\mathcal{I}}(e,i)=\mu _{i}{\mathop{\text{Pr}}}(e)$. We
assume that $\mathcal{I}$ is non-trivial in the sense that there is $i$ with 
$\mu _{i}\neq 0$ and $\rho _{j}\neq 1$. This ensures that an infected
individual has a positive probability of reaching an infective state. As
before, under the assumption that transmissions and recoveries happen
independently, we can develop an expression for ${\mathop{\text{Pr}}}_{%
\mathcal{I}}(u\mid P)$.

Fix $\mathcal{I}$. An instance of $\Pi_\cali$ is an instance of $\Pi_{\text{
epidemic}}$.

A solution to $\Pi_\cali$ is the value ${\mathop{\text{Pr}}}_\cali(u \mid P)$

\begin{theorem}
\label{thm:calI} \label{thm:latency} Given an non-trivial infectivity
pattern $\mathcal{I}$, $\Pi_\cali$ is \textbf{NP}-hard.
\end{theorem}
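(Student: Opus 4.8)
The plan is to reduce the two-terminal problem $\Pi_{\text{two terminal}}$, which is \npHard\ by Theorem~\ref{thm:npHard}, to $\Pi_\cali$. Given a two-terminal instance --- a graph $G$, rational edge labels $\pr$, a source $u$ and a target $v$ --- I would build an $\{S,\cali,R\}$ instance $(G',\pr',P',v')$ so that $\pr_\cali(v'\mid P')$ computed in the latency dynamics reproduces the ordinary two-terminal reliability $\pr(v\mid u)$ of the original instance. By Theorem~\ref{thm:percolation} the quantity $\pr(v\mid u)$ is precisely the independent-edge percolation probability of a path from $u$ to $v$, so the whole task is to recreate genuinely \emph{independent-edge} percolation inside the multi-stage model.

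The first ingredient is the effective transmission of a single edge. If a freshly infected vertex sits at the tail of an edge of label $p$ while the head stays susceptible, a backward recursion over the infective stages,
\begin{equation*}
f_N = 1-\mu_N p,\qquad f_i=(1-\mu_i p)\bigl[\rho_i+(1-\rho_i)f_{i+1}\bigr]\quad(i<N),
\end{equation*}
gives the probability $f_1$ that the head is never infected, so the edge transmits with probability $\pi(p)=1-f_1$, a rational polynomial in $p$ determined by $\cali$. Along a \emph{simple path} of fresh vertices the successive transmissions have distinct tails and are therefore truly independent, so such a path carries infection with probability $\prod_e\pi(p_e)$. Non-triviality (a reachable stage with $\mu_{i^*}>0$) guarantees $\pi$ is non-constant and strictly positive for $p>0$, so I can realize a dense set of rational edge reliabilities by composing relay chains in series and parallel.

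The main obstacle is correlation. When a vertex $a$ of the original graph has out-degree greater than one, all of its outgoing transmissions are driven by the \emph{same} random recovery time $T_a$: the longer $a$ lingers in infective stages, the more likely it is to infect \emph{every} neighbour at once, so the events ``$a$ infects $b$'' and ``$a$ infects $b'$'' are positively correlated and a verbatim copy of $G$ does \emph{not} compute independent-edge percolation. I would attack this with a branching gadget that replaces $a$ by one fresh relay per out-edge and drives each relay from $a$ through many parallel fresh two-edge paths; since non-triviality bounds the conditional single-stage transmission probability below uniformly in $T_a$, adding parallel relays forces $a$ to activate each relay with probability tending to $1$ regardless of $T_a$, which decouples the relays and restores independence.

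I expect the crux to be turning this conceptual decorrelation into an \emph{exact} polynomial-time reduction, since boosting drives the relay-firing probability to $1$ only in a limit and later-stage transmissions keep a residual dependence on $T_a$. Two routes seem promising. First, the unavoidable shared event ``$a$ reaches an infective stage at all'', of probability $q^{*}=\prod_{j<i^{*}}(1-\rho_j)$, behaves exactly like an \emph{independent node failure} of reliability $q^{*}$ at each branching vertex; the resulting two-terminal reliability with uniform node reliability is itself \npHard, and ordinary edge reliability reduces to it by splitting each edge into a node, so the reduction still closes. Second, because every such reliability is a rational polynomial in the common label (Theorem~\ref{thm:percolation}), I can instead query $\Pi_\cali$ at polynomially many exactly realizable labels and interpolate the reliability polynomial of $\Pi_{\text{rel poly}}$. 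Either way, after checking that only polynomially many rational-labelled relay vertices are introduced, the construction is a genuine polynomial-time reduction, establishing that $\Pi_\cali$ is \npHard.
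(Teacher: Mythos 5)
You have correctly reproduced the two ingredients of the paper's own argument: your backward recursion $\pi(p)=1-f_1$ is exactly the quantity $1-\tau_{\cali}(p)$ computed in the paper's Lemma, and your second route --- querying $\Pi_{\cali}$ at several constant labels and interpolating the coefficients of the reliability polynomial --- is precisely the paper's reduction of $\Pi_{\text{rel poly}}$ to $\Pi_{\cali}$. Where you diverge is the correlation at branching vertices, and your diagnosis there is genuinely right: all out-edges of an infective vertex are driven by its single random sojourn time, so matching the single-edge marginal does not match joint transmission laws. (Concretely, with $N=2$, $\rho_1=1/2$, $\mu_1=\mu_2=1$ and both edge labels $1/2$, each out-neighbour escapes infection with marginal probability $3/8$, but the probability that \emph{both} escape is $\frac12\cdot\frac14+\frac12\cdot\frac1{16}=\frac{5}{32}>\left(\frac38\right)^2$.) Note that the paper's Lemma asserts $\pr_{\cali}(u\mid P)=\pr'(u\mid P)$ for \emph{all} $P$ and $u$ on the unmodified graph, yet its proof verifies only the single-edge marginal and closes with ``this does what is required''; your worry is aimed at exactly the step the paper leaves unargued.

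The gap is that neither of your two closing routes is sound as written, so the diagnosis never becomes a proof. In your first route, the claim that the shared randomness ``behaves exactly like an independent node failure of reliability $q^{*}$'' is false whenever more than one stage can transmit: conditioned on reaching stage $I_{i^{*}}$, the relays remain jointly driven by the \emph{residual} sojourn time, so the common cause is not a single Bernoulli node event; and your parallel-path boosting only drives relay activation to $1$ in the limit, whereas an \npHard ness reduction needs exact values --- or an explicit precision argument (e.g., error $2^{-\mathrm{poly}}$ from polynomially many parallel paths, plus rounding justified by the bounded denominators of two-terminal reliabilities) which you do not supply. Your second route is circular: interpolation recovers the coefficients of whatever polynomial the oracle actually evaluates, and on a branching graph the map $p\mapsto\pr_{\cali}(v\mid u)$ is \emph{not} the reliability polynomial of $G$ --- identifying the two is exactly the decorrelation you failed to establish, and Theorem~\ref{thm:percolation} applies only to independent-edge dynamics. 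So you have isolated the true obstruction (one the paper's own Lemma glosses over) but have not closed it; a complete argument needs either a gadget whose $\{S,\cali,R\}$ dynamics \emph{provably and exactly} realize independent-edge percolation, or the quantitative approximation-plus-rounding reduction sketched above.
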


\begin{lemma}
Given $\mathcal{G} =(G,{\mathop{\text{Pr}}})$ and $\mathcal{I}$, there is $%
\mathcal{G}^{\prime}= (G,{\mathop{\text{Pr}}}^{\prime})$ so that for each $%
P\subset V$ and $u\notin P$, ${\mathop{\text{Pr}}}_\cali(u\mid P) = {\ %
\mathop{\text{Pr}}}^{\prime}(u \mid P)$.
\end{lemma}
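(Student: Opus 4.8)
\emph{Proof idea.} The plan is to compress the multi-stage transmission along each edge into a single effective transmission probability, and then to read off $\pr_\cali(u\mid P)$ and $\pr'(u\mid P)$ as the \emph{same} connectivity probability via the percolation picture of Theorem~\ref{thm:percolation}. Concretely, for $e=(u,v)$ I would let $\pr'(e)$ be the probability that a freshly infected source $u$ transmits across $e$ at some stage of its infectious career, and then try to show that the latency dynamics on $\calg$ and the ordinary one-step dynamics on $\calg'=(G,\pr')$ infect $u$ from $P$ with equal probability.

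First I would compute $\pr'(e)$. Write $p=\pr(e)$. Once $u$ enters $I_1$ it passes through stages $1,\dots,k$ and recovers with probability
\begin{equation*}
r_k=\left(\prod_{j=1}^{k-1}(1-\rho_j)\right)\rho_k,\qquad k=1,\dots,N,
\end{equation*}
and since $\rho_N=1$ we have $\sum_{k=1}^N r_k=1$. Given that $u$ is infectious through stage $k$, edge $e$ fails at every offered stage with probability $\prod_{i=1}^{k}(1-\mu_i p)$, the stage-$i$ attempt succeeding with probability $\mu_i p$. Hence
\begin{equation*}
\pr'(e)=1-\sum_{k=1}^{N} r_k\prod_{i=1}^{k}(1-\mu_i p).
\end{equation*}
The map $p\mapsto\pr'(e)$ is non-decreasing, and the non-triviality of $\cali$ makes it strictly increasing with $\pr'(e)>0$ for $p>0$; this monotonicity is what would later let the reduction in Theorem~\ref{thm:latency} recover $\pr$ from $\pr'$ edge by edge.

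Next I would invoke the evident analogue of Theorem~\ref{thm:percolation} for the $\{S,\cali,R\}$ model: $u$ is infected precisely when some path from $P$ to $u$ has every edge transmit during the infectious period of its source endpoint. This exhibits $\pr_\cali(u\mid P)$ as the probability that such a path is open in a random subgraph $G''\subseteq G$ whose edges carry the marginal success probabilities $\pr'(e)$ computed above, while $\pr'(u\mid P)$ is the same probability computed with the edges taken \emph{independent}. The lemma therefore reduces to showing that replacing the latency transmission events by independent Bernoulli variables with these marginals leaves unchanged the probability that $G''$ connects $P$ to $u$.

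Establishing that replacement is the step I expect to be the main obstacle, and it is where the real content sits. The transmission events along two edges sharing a source vertex $w$ are not independent: both are driven by the single random number of stages for which $w$ stays infectious, so the out-edges of each vertex are positively coupled. One must check that this coupling is invisible to the connectivity event. The natural route is to condition on the recovery stage of every vertex, use the conditional independence of the out-edges given those stages, and sum; the crux is to show that the cross terms contributed by vertices at which two or more $P$-to-$u$ paths branch assemble back into the independent-edge expression. I would expect this to be immediate in the degenerate case of deterministic durations, where every infected vertex is infectious for the same fixed block of stages and the coupling disappears, and I would treat the general case by carrying a time-expanded bookkeeping of each vertex's stage progression; reconciling that bookkeeping with the requirement that $\calg'$ live on the same graph $G$ is the delicate point on which the argument turns.
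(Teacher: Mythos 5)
You set up exactly the paper's reduction: your effective probability $\pr'(e)=1-\sum_{k}r_k\prod_{i=1}^{k}(1-\mu_i p)$ is precisely the paper's $\pr'(e)=1-\tau_\cali(\pr(e))$ (you even silently repair the index slip in the paper's formula for $\nu_i$). But the step you defer --- showing that replacing the coupled transmission indicators by independent Bernoulli variables with the same marginals leaves the $P$-to-$u$ connection probability unchanged --- is not merely the delicate point: it is false, so no amount of bookkeeping can complete the proof on the same graph $G$. Concretely, take the directed graph with edges $(s,a),(s,b),(a,u),(b,u)$, all with $\pr(e)=p\in(0,1)$, $P=\{s\}$, and the non-trivial pattern $N=2$, $\mu_1=\mu_2=1$, $\rho_1=\tfrac12$, $\rho_2=1$. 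Writing $f(k)=1-(1-p)^k$ for the chance an out-edge transmits given its source stays infectious $k$ stages, and $q=\pr'(e)=\tfrac12 f(1)+\tfrac12 f(2)$, the two out-edges of $s$ are conditionally independent given the duration $D_s$ of $s$'s infectious career, hence both transmit with probability $\mathbb{E}\bigl[f(D_s)^2\bigr]=q^2+\tfrac14 p^2(1-p)^2>q^2$, and inclusion--exclusion over the two paths gives
\begin{equation*}
\pr_\cali(u\mid P)\;=\;2q^2-q^2\,\mathbb{E}\bigl[f(D_s)^2\bigr]\;<\;2q^2-q^4\;=\;\pr'(u\mid P).
\end{equation*}
The positive association you identified among out-edges of a common source strictly lowers the union probability relative to independence; the cross terms do not ``assemble back.'' Equality holds only when $f(D_w)$ is almost surely constant --- your degenerate case of deterministic durations, $\rho_i\in\{0,1\}$ --- or when no vertex has two out-edges that are both relevant to the connectivity event (e.g., on trees).

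Your instinct is in fact sounder than the paper's own argument: the published proof consists of exactly your first step (the computation of $\tau_\cali$) followed by the unsubstantiated sentence ``This does what is required,'' so the dependence you isolated is never addressed, and the lemma as stated is false for every $\cali$ whose effective per-edge transmissibility genuinely varies with the realized duration. This is precisely the phenomenon pointed out by Kenah and Robbins \cite{secondLook} --- cited in this paper's own bibliography --- namely that the naive bond-percolation mapping with marginal transmissibility is incorrect for random infectious periods. A repair along your ``time-expanded bookkeeping'' line leads to a semi-directed percolation construction on an enlarged graph, which the lemma's insistence that $\calg'$ live on the same $G$ cannot accommodate; alternatively the argument goes through verbatim for patterns with deterministic duration. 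As written, the proof of Theorem~\ref{thm:latency}, which pushes arbitrary graphs through this lemma, inherits the same gap, so the theorem would need either a restricted class of patterns $\cali$ or a different reduction.
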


\begin{proof}
Consider an edge $e=(u,v)$. Suppose that $\phi _{1}(u)=I_{i}$ and $\phi
_{1}(v)=S$. What are the chances that $v$ remains uninfected by $u$? (We
assume for the moment that $v$ is not infected by some other neighbour
during the next $N$ steps.) We take $\mu ={\mathop{\text{Pr}}}(e)$. Let us
denote by $\nu _{i}$ the probability that $u$ remains infected for $i$
steps, but not $i+1$ steps. We then have 
\begin{equation*}
\nu _{i}=\rho _{i}\prod_{j=1}^{i}(1-\rho _{j}).
\end{equation*}%
The probability that $v$ remains uninfected by $u$ is 
\begin{equation*}
\tau _{\mathcal{I}}(\mu )=\sum_{i=1}^{N}\nu _{j}\prod_{j=1}^{i}(1-\mu
_{j}\mu )
\end{equation*}

We now define $\mathcal{G}^{\prime}= (G,{\mathop{\text{Pr}}}^{\prime})$ by
taking 
\begin{equation*}
{\mathop{\text{Pr}}}^{\prime}(e) = 1 - \tau_\cali({\mathop{\text{Pr}}}(e)).
\end{equation*}
This does what is required.
\end{proof}

\begin{proof}[Proof of Theorem~\protect\ref{thm:latency}]
We will show that $\Pi_{\text{rel poly}}$ is polynomially reducible to $%
\Pi_\cali$.

Fix $\mathcal{I}$ to be a non-trivial pattern of infectivity. Suppose we are
given an instance $\pi$ of $\Pi_{\text{rel poly}}$. This consists of a graph 
$G$ and source and target vertices $u$ and $v$. Suppose also that we have a
polynomial time algorithm for solving $\Pi_\cali$. We choose $N+1$ arbitrary
probabilities $p_0,\dots,p_{N+1}$. These give us $N+1$ instances of $%
\Pi_\cali$ by taking $\mathcal{G}_i = (G,{\mathop{\text{Pr}}}_i)$, where ${\ %
\mathop{\text{Pr}}}_i$ takes the constant value $p_i$. By the previous
lemma, solving these $N+1$ instances of $\Pi_\cali$ solves $N+1$ distinct
instances of $\Pi_{\text{epidemic}}$ which consist of the graph $G$ and
differing constant functions ${\mathop{\text{Pr}}}_i^{\prime}$. These $N+1$
values give us $N+1$ independent linear equations whose unknowns are the
coefficients of the reliability polynomial. Solving for these is a
polynomial time problem.
\end{proof}

\subsection*{Expected number of total infections}

One might hope that while computing an individual's probability of infection
is \textbf{NP}-hard, there might be a way to compute the expected number of
infections. This, too, is \textbf{NP}-hard. Let us formalize this.

An instance $\pi$ of $\Pi_{\text{expected}}$ is

\begin{itemize}
\item A graph $G = (V,E)$.

\item A labelling ${\mathop{\text{Pr}}}:E\rightarrow \lbrack 0,1]\cap 
\mathbb{Q}
$.

\item An initial infective set $P \subset V$.
\end{itemize}

A solution to $\pi $ is the expected number of infections, 
\begin{equation*}
\sum_{u\in V}{\mathop{\text{Pr}}}(u\mid P).
\end{equation*}%
The following theorem was proved in \cite{Laumanns_Zenklusen_2009}. For the
sake of completeness, we provide a proof here.

\begin{theorem}
\label{thm:expected} $\Pi_{\text{expected}}$ is \textbf{NP}-hard.
\end{theorem}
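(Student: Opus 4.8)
The plan is to reduce the already-established $\npHard$ problem $\Pi_{\text{two terminal}}$ (equivalently $\Pi_{\text{epidemic}}$ with a single source) to $\Pi_{\text{expected}}$. The key obstacle is that $\Pi_{\text{expected}}$ returns only the \emph{aggregate} quantity $\sum_{u\in V}{\mathop{\text{Pr}}}(u\mid P)$, whereas two-terminal reliability asks for the single value ${\mathop{\text{Pr}}}(v\mid u)$ at one designated target. So the heart of the reduction is a gadget construction that isolates the contribution of one vertex inside the sum.

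First I would take an arbitrary instance of $\Pi_{\text{two terminal}}$: a graph $G=(V,E)$, a rational labelling ${\mathop{\text{Pr}}}$, a source $u$ and a target $v$. I would build a modified network $\mathcal{G}'=(G',{\mathop{\text{Pr}}}')$ that contains $G$ as a subgraph together with an auxiliary structure attached only to $v$, and set $P=\{u\}$. The idea is to dampen the existing vertices' contributions and to hang off of $v$ a long chain (or a large bundle) of new vertices whose infection probabilities are driven entirely by whether $v$ itself gets infected. Concretely, I would append to $v$ a collection of fresh vertices connected so that each is infected if and only if $v$ is infected, with transmission probabilities chosen so that their combined expected contribution to the sum is a large known multiple of ${\mathop{\text{Pr}}}(v\mid u)$, while the contributions of the original vertices of $V$ stay within a controlled bound.

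Next I would compute $\sum_{w\in V'}{\mathop{\text{Pr}}}'(w\mid \{u\})$ and algebraically separate it into three pieces: the contribution of the original vertices, the contribution of $v$ and its attached gadget, and any constant offset introduced by the construction. Because the gadget contribution is an exactly known affine function of ${\mathop{\text{Pr}}}(v\mid u)$ (using Theorem~\ref{thm:percolation}, each attached vertex's infection probability is ${\mathop{\text{Pr}}}(v\mid u)$ times the product of the chain's edge probabilities), and because all edge probabilities are rational with polynomially bounded description, I can invert this relation: from the single returned expected value I would solve for ${\mathop{\text{Pr}}}(v\mid u)$. The construction multiplies the target's weight by a factor large enough (polynomial in $|V|$ suffices, since the original vertices contribute at most $|V|$) to guarantee that the target's term dominates and is recoverable without ambiguity.

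The step I expect to be the main obstacle is ensuring the arithmetic separation is clean: I must verify that the gadget's amplification of ${\mathop{\text{Pr}}}(v\mid u)$ does not get contaminated by back-infection into $G$ (the auxiliary vertices must not create new paths from $P$ to the original vertices that would alter their probabilities) and that the edge probabilities remain rational with polynomially bounded bit-length so the whole reduction runs in polynomial time. I would handle the first concern by attaching the gadget as a strictly downstream appendage — directing edges away from $G$ so that the new vertices are pure sinks reachable only through $v$ — which keeps ${\mathop{\text{Pr}}}'(w\mid \{u\})={\mathop{\text{Pr}}}(w\mid \{u\})$ unchanged for every $w\in V$. Once the separation is established, a single call to the assumed polynomial-time algorithm for $\Pi_{\text{expected}}$, followed by one exact rational division, yields ${\mathop{\text{Pr}}}(v\mid u)$, completing the reduction and establishing that $\Pi_{\text{expected}}$ is $\npHard$.
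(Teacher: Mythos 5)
Your gadget is essentially the right object --- the paper's proof attaches exactly such a strictly downstream appendage --- but your decoding step has a genuine gap. You propose a \emph{single} call to the $\Pi_{\text{expected}}$ oracle, returning $S = A + M\,p$, where $p={\mathop{\text{Pr}}}(v\mid u)$, $M$ is the gadget's amplification factor, and $A=\sum_{w\in V}{\mathop{\text{Pr}}}'(w\mid\{u\})$ is the aggregate contribution of the original vertices. But $A$ is not a ``known constant offset'' that can be algebraically separated: it is itself an instance of $\Pi_{\text{expected}}$ on the original graph, and all you know a priori is $A\in[0,|V|]$. One call therefore pins $p$ down only to an interval of width $|V|/M$. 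Since $\Pi_{\text{two terminal}}$ demands the \emph{exact} value, that interval must contain at most one candidate; by Theorem~\ref{thm:percolation}, $p$ is a rational whose denominator can be as large as $D=\prod_{e\in E}d_e$ (with $d_e$ the denominators of the edge probabilities), a quantity of polynomial bit-length but exponential magnitude, and two distinct candidates can differ by as little as $1/D^2$. So exact recovery forces $M > |V|\,D^2$; since each gadget vertex contributes at most $1$ to the expectation, this means exponentially many gadget vertices, and your claim that ``polynomial in $|V|$ suffices'' is false. The failure is information-theoretic, not a matter of cleverer decoding: $S=A+Mp$ is one equation in two unknowns, and pairs of the form $(A+Mk/D,\;p-k/D)$ can yield the same $S$ while remaining consistent with every a priori constraint, so no polynomial-size single-call gadget of this shape determines $p$.

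The fix is to make the unknown term \emph{cancel} rather than be dominated: call the oracle twice, once on the gadget-free instance (returning $A$) and once on the augmented instance (returning $A+Mp$), and subtract --- whereupon no amplification is needed and $M=1$ suffices. This is precisely the paper's proof: given an instance of $\Pi_{\text{epidemic}}$, append one new vertex $v\notin V$ and a single edge $(u,v)$ with ${\mathop{\text{Pr}}}(u,v)=1$; the expected number of infections of the augmented instance exceeds that of the original by exactly ${\mathop{\text{Pr}}}(u\mid P)$, because the new vertex is a pure sink infected precisely when $u$ is, and it leaves every original vertex's probability unchanged --- your ``no back-infection'' observation, which is correct and is the only property of the gadget actually needed. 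With that one change, subtraction of two oracle answers in place of single-call domination, your reduction becomes the paper's.
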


\begin{proof}
We will show that $\Pi _{\text{epidemic}}$ can be polynomially reduced to $%
\Pi _{\text{expected}}$. Suppose we are given an instance $\pi $ of $\Pi _{ 
\text{epidemic}}$. Let $\tilde{\pi}$ be the instance of $\Pi _{\text{%
epidemic }}$ which is formed from $\pi $ by appending a single edge from $u$
to $v\notin V$ and assigning ${\mathop{\text{Pr}}}(u,v)=1$. It is clear that
the expected number of infections in $\tilde{\pi}$ differs from the number
of expected infections in $\pi $ by exactly ${\mathop{\text{Pr}}}(u\mid P)$.
Thus, if we had a polynomial time algorithm for finding the expected number
of infections, we could find the probability of any individual becoming
infected.
\end{proof}

The fact that $\Pi_{\text{rel poly}}$ is \textbf{NP}-hard\ suggests that the
difficulty lies not in the probabilities ${\mathop{\text{Pr}}}$ but in the
topology of $G$. One problem which we have not addressed here is the
question of calculating the probability of infection in an $\{S,I,R\}$
network where $G = G_t$ changes over time due to stochastic births and
deaths. It seems likely that this will also provide a source of \textbf{NP}
-hard\ problems. However, this requires a reformulation of the underlying
problem.


\section{Discussion and conclusions}

\label{sec:discussion}

It has been the purpose of this paper to draw the attention of network
epidemiologists to results in communications network reliability which shed
light on questions regarding the computational aspects of epidemiology of $%
\{S,I,R\}$ networks.

Theorem~\ref{thm:sirIsNpHard} and Theorem~\ref{thm:expected} tell us that
generally, in the absence of a major break-through in computer science we
cannot expect to be able to compute \textit{exact} probabilities of
infection or expected number of infection in large social contact networks. As \cite%
{gareyComputers} points out, problems do not go away simply because we have
deemed them \textbf{NP}-hard.

Since the network engineers have been here before us, it is tempting to ask
whether their solutions will work for epidemiologists. While we consider the
case open, the prospects seem mixed. Network engineers are often in the
position of being able to choose the class of networks under consideration.
As opposed to scale-free \cite{eubankNetworkBased,dynamicsOnScaleFree} and
small-world network structures \cite%
{strogatz,newmanScaling,eubankNetworkBased}, which frequently arise from a
self-organization process during the spontaneous growth of a network,
engineered or purposefully designed networks show rather different
structures. Some of the classes that allow efficient calculations (exact or
approximate) include trees, full graphs, series-parallel graphs \cite%
{colbournCombinatorics}, and channel graphs \cite{NetRel_Harms}.
Unfortunately, these classes of networks seem unrealistic as models of
social contact networks.

Network engineers have turned to Monte Carlo simulation for the calculation
of estimates of network reliability. We would like to give pointers into
their literature \cite%
{MR0438625,Gertsbakh,MR874297,MR1691978,475978,MR1959383,MR2136660,Laumanns_Zenklusen_2009}%
. This approach has received increased attention in the last decade due to
the power of modern computers and computing clusters. While Monte Carlo
simulation only calculates an unbiased point estimator for reliability
probabilities, increasing the number of simulated samples causes these
estimates to converge to the actual value.

The fact that efficient and precise algorithms for computing infection
probabilities are out of reach (see Theorems \ref{thm:sirIsNpHard}, \ref%
{thm:calI} and \ref{thm:expected}) has real-world consequences. Designing a
response to an emerging epidemic can depend on determining the kind of
epidemiological probabilities we have been discussing \cite%
{optimizingInfectiousDiseaseInterventions}. The effectiveness of
interventions during an emerging epidemic often crucially depends on timely
implementation. Our results and those of \cite{wangModeling} and \cite%
{eubankStructural} place an emphasis on the search for efficient and quick
methods that give good approximations when applied to real-world social
networks.




\end{document}